\theoremstyle{definition}
\newtheorem{definition}{Definition}
\newtheorem{theorem}{Theorem}
\newtheorem{corollary}{Corollary}
\begin{document}

\title{Triple Graph Grammars for Multi-version Models}

\author{Matthias Barkowsky\\matthias.barkowsky@hpi.de \and Holger Giese\\holger.giese@hpi.de}

\maketitle

\begin{abstract}
Like conventional software projects, projects in model-driven software engineering require adequate management of multiple versions of development artifacts, importantly allowing living with temporary inconsistencies. In the case of model-driven software engineering, employed versioning approaches also have to handle situations where different artifacts, that is, different models, are linked via automatic model transformations.

In this report, we propose a technique for jointly handling the transformation of multiple versions of a source model into corresponding versions of a target model, which enables the use of a more compact representation that may afford improved execution time of both the transformation and further analysis operations. Our approach is based on the well-known formalism of triple graph grammars and a previously introduced encoding of model version histories called multi-version models. In addition to showing the correctness of our approach with respect to the standard semantics of triple graph grammars, we conduct an empirical evaluation that demonstrates the potential benefit regarding execution time performance.
\end{abstract}

% !TeX root = ../tr_mvm_tggs.tex

\section{Introduction} \label{sec:introduction}

%Why we need model transformations
In model-driven software development, models are treated as primary development artifacts. Complex projects can involve multiple models, which describe the system under development at different levels of abstraction or with respect to different system aspects and can be edited independently by a team of developers. In this case, consistency of the holistic system description is ensured by model transformations that automatically derive concrete models from more abstract ones or propagate changes to a model describing one aspect of the system to other models concerned with different but overlapping aspects \cite{seibel2010dynamic}.

%Why we need multi-version models
Similarly to program code in conventional software development, the evolution of models via changes made by different developers requires management of the resulting versions of the software description. In particular, version management has to support parallel development activities of multiple developers working on the same development artifact, where living with inconsistencies of a single artifact may temporarily be necessary to avoid loss of information \cite{Finkelstein+1994}. In \cite{Barkowsky2022}, we have introduced multi-version models as a means of managing multiple versions of the same model that also enables monitoring the consistency of the individual model versions and potential merge results of versions developed in parallel.

%Why we need model transformations for multi-version models
However, with model transformations effectively linking multiple models via consistency relationships, considering only the evolution of a single model without its context is insufficient for larger model-driven software development projects. Thus, a mechanism for establishing consistency of different versions of such linked models that simultaneously allows parallel development of multiple versions is required.

%What benefit such a mechanism might have
Such a mechanism would allow working with more compact representations that also enable further analysis operations as described in \cite{Barkowsky2022}. In addition, an integrated handling of multiple model versions may afford an improved execution time performance of the transformation.

%What we do in this report
In this report, we propose a first step in the direction of model transformations working on multi-version models by adapting the well-known formalism of triple graph grammars, which enables the implementation of single-version model transformations, to the multi-version case.

%How the remainder of the report is structured
The remainder of the report is structured as follows: In Section \ref{sec:preliminaries}, we briefly reiterate the basic concepts of graphs, graph transformations, triple graph grammars, and multi-version models, as used in this report. Subsequently, we present our approach for deriving transformation rules that work on multi-version models from single-version model transformation specifications in the form of triple graph grammars in Section \ref{sec:mv_rules_derivation}. In Section \ref{sec:mv_transformation_execution}, we describe how the derived rules can be used to realize the joint transformation of all individual model versions encoded in a multi-version model and prove the correctness of our technique with respect to the semantics of triple graph grammars. Section \ref{sec:evaluation} reports on the results of an initial evaluation of the presented solution's performance regarding execution time, which is based on an application scenario in the software development domain. Related work is discussed in Section \ref{sec:related_work}, before Section \ref{sec:conclusion} concludes the report.

% !TeX root = ../tr_mvm_tggs.tex

\section{Preliminaries} \label{sec:preliminaries}

In this section, we give a brief overview of required preliminaries regarding graphs and graph transformations, triple graph grammars and multi-version models.

\subsection{Graphs and Graph Transformations}

%TODO mostly copied from ICGT paper

We briefly reiterate the concepts of graphs, graph morphisms and graph transformations and their typed analogs as defined in \cite{Ehrig+2006} and required in the remainder of the report.

A graph $G = (V^G, E^G, s^G, t^G)$ consists of a set of nodes $V^G$, a set of edges $E^G$ and two functions $s^G: E^G \rightarrow V^G$ and $t^G: E^G \rightarrow V^G$ assigning each edge its source and target, respectively. A graph morphism $m: G \rightarrow H$ is given by a pair of functions $m^V: V^G \rightarrow V^H$ and $m^E: E^G \rightarrow E^H$ that map elements from $G$ to elements from $H$ such that $s^H \circ m^E = m^V \circ s^G$ and $t^H \circ m^E = m^V \circ t^G$. We also call $m^V$ the \emph{vertex morphism} and $m^E$ the \emph{edge morphism}.

A graph $G$ can be typed over a type graph $TG$ via a typing morphism $\mathit{type}: G \rightarrow TG$, forming the typed graph $G^T = (G, \mathit{type}^G)$. In this report, we consider a model to be a typed graph, with the type graph defining a modeling language by acting as a metamodel.

A typed graph morphism between two typed graphs $G^T = (G, \mathit{type}^G)$ and $H^T = (H, \mathit{type}^H)$ with the same type graph then denotes a graph morphism $m^T: G \rightarrow H$ such that $\mathit{type}^G = \mathit{type}^H \circ m^T$. A (typed) graph morphism $m$ is a monomorphism iff its functions $m^V$ and $m^E$ are injective.

Figure \ref{fig:example_graph_type_graph} shows an example typed graph $G$ from the software development domain along with the corresponding type graph $TG$. The typing morphism is encoded by the node's labels. $G$ represents an abstract syntax graph of a program written in an object-oriented programming language, where nodes may represent class declarations ($ClassDecl$), field declarations ($FieldDecl$) or type accesses ($TypeAccess$). Class declarations may contain field declarations via edges of type $declaration$, whereas field declarations can reference a class declaration as the field type via a $TypeAccess$ node and edges of type $access$ and $type$. The example graph contains two class declarations, one of which contains a field declaration, the field type of which is given by the other class declaration.

\begin{figure}
\centering
\includegraphics[width=\textwidth]{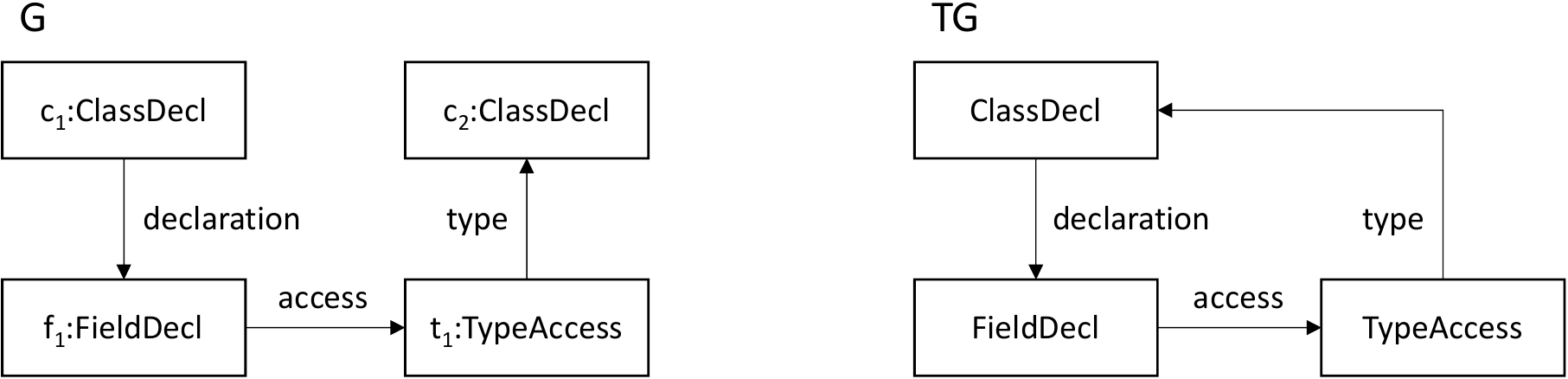}
\caption{example graph and type graph} \label{fig:example_graph_type_graph}
\end{figure}

A (typed) graph transformation rule $r$ is characterized by a span of (typed) graph monomorphisms $L \xleftarrow{l} K \xrightarrow{r} R$ and can be applied to a graph $G$ via a monomorphism $m : L \rightarrow G$ called match that satisfies the so-called dangling condition \cite{Ehrig+2006}. The result graph $H$ of the rule application is then formally defined by a double pushout over an intermediate graph \cite{Ehrig+2006}. Intuitively, the application of $r$ deletes the elements in $m(L)$ that do not have a corresponding element in $R$ and creates new elements for elements in $R$ that do not have a corresponding element in $L$. The graph $L$ is also called the rule's \emph{left-hand side}, $K$ is called the rule's \emph{glueing graph}, and $R$ is called the \emph{right-hand side}.

$r$ is called a graph production if it does not delete any elements, that is, $l$ is surjective. In this case, since $L$ and $K$ are isomorphic with $l$ an isomorphism and we only distinguish graphs up to isomorphism, we also use the simplified representation $L \xrightarrow{r} R$.

Figure \ref{fig:example_gt_rule} shows an example graph production in shorthand notation, where preserved elements are colored black, whereas created elements are colored green and marked by an additional ``++'' label. For two existing classes, the production creates a field declaration in one of them that references the other class declaration as the field type.

\begin{figure}
\centering
\includegraphics[width=0.4\textwidth]{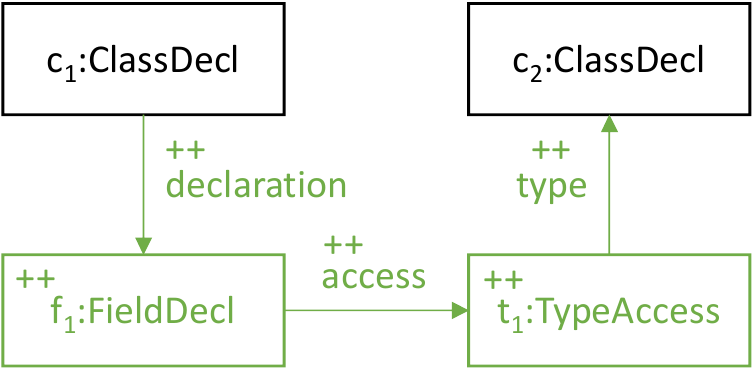}
\caption{example graph transformation rule in shorthand notation} \label{fig:example_gt_rule}
\end{figure}

We denote a sequence of applications of rules from a set of rules $R$ to a graph $G$ with resulting graph $G'$ by $G \rightarrow^{R} G'$. We say that such a rule application sequence is maximal if it cannot be extended by any application of a rule from $R$.

\begin{definition} \textit{Maximal Rule Application Sequence}
A sequence of rule applications $G \rightarrow^{R} G'$ with a set of (multi-version or original) forward rules $R$ is maximal if no rule from $R$ is applicable to $G'$.
\end{definition}

\subsection{Triple Graph Grammars}

Triple graph grammars were initially presented by Schuerr \cite{Sch94_2_ref}. This report is based on the slightly adapted version introduced in \cite{Giese+2014}.

In \cite{Giese+2014}, a triple graph grammar (TGG) relates a source and a target modeling language via a correspondence modeling language and is characterized by a set of TGG rules. A TGG rule is defined by a graph production that simultaneously transforms connected graphs from the source, correspondence and target modeling language into a consistently modified graph triplet. The set of TGG rules has to include a dedicated axiom rule, which has a triplet of empty graphs as its left-hand side and practically defines a triplet of starting graphs via its right-hand side.

The left-hand side of a TGG rule $r = L \xrightarrow{r} R$ can be divided into the source, correspondence, and target domains $L_S$, $L_C$, and $L_T$ respectively, with $L_S \subseteq L$, $L_C \subseteq L$, and $L_R \subseteq L$ and $L_S \uplus L_C \uplus L_R = L$. The right-hand side can similarly be divided into three domains $R_S$, $R_C$, and $R_T$. The type graph for graph triplets and TGG rules is hence given by the union of the type graphs defining the source, correspondence, and target language along with additional edges connecting nodes in the correspondence language to nodes in the source and target language.

Figure \ref{fig:example_tgg_rule} shows a TGG rule for linking the language for abstract syntax graphs given by the type graph in Figure \ref{fig:example_graph_type_graph} to a modeling language for class diagrams given by the type graphs $TT$ in Figure \ref{fig:example_type_graphs_tgg}, using the correspondence language $TC$ from Figure \ref{fig:example_type_graphs_tgg}. The rule simultaneously creates a $FieldDecl$ and $TypeAccess$ along with associated edges in the source domain (labeled S) and a corresponding $Association$ with associated edges in the target domain (labeled T), which are linked via a newly created correspondence node of type $CorrField$ in the correspondence domain (labeled C).

\begin{figure}
\centering
\includegraphics[width=\textwidth]{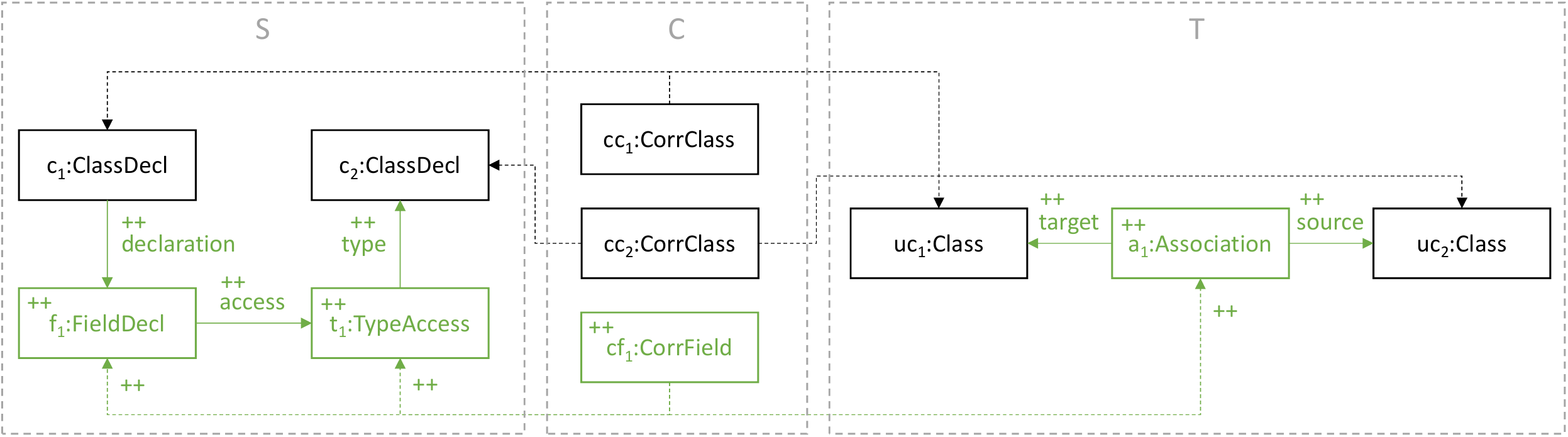}
\caption{example TGG rule in shorthand notation} \label{fig:example_tgg_rule}
\end{figure}

\begin{figure}
\centering
\includegraphics[width=0.4\textwidth]{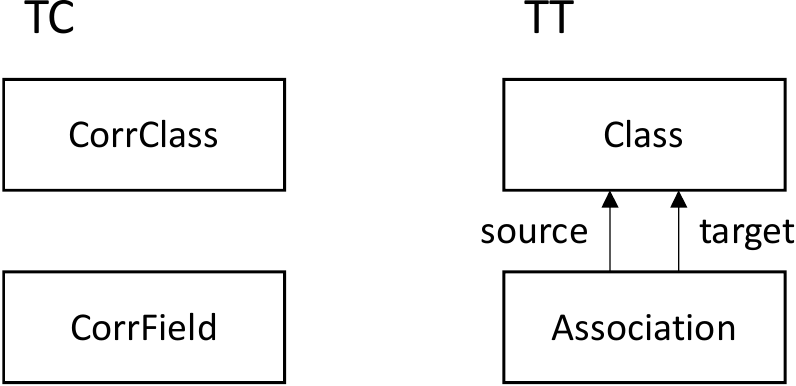}
\caption{example type graphs for the TGG rule in Figure \ref{fig:example_tgg_rule}} \label{fig:example_type_graphs_tgg}
\end{figure}

TGGs can be employed to transform a model of the source language into a model of the target language. This requires the derivation of so-called forward rules from the set of TGG rules. A forward rule for a TGG rule $r = L \xrightarrow{r} R$ can be constructed as $r^F = L^F \xleftarrow{id} L^F \xrightarrow{r^F} R$, where $L^F = L \cup (R_S \setminus r(L))$ and $r^F = r \cup id$, with $id$ the identity morphism. Intuitively, $r^F$ already requires the existence of the elements in the source domain that would be created by an application of $r$ and only creates elements in the correspondence and target domain. In the following, we also denote the subgraph of a forward rule that corresponds to the subgraph that is newly transformed by the rule by $L^T = L^F \setminus L$.

Additionally, the derivation of a forward rule requires a technical extension to avoid redundant translation of the same element. Therefore, a dedicated \emph{bookkeeping node}, which is connected to every currently untranslated source element via a \emph{bookkeeping edge}, is introduced. Then, a bookkeeping node and bookkeeping edges to all elements in $L^T$ are added to the forward rule's left-hand side. The bookkeeping node is also added to the rule's glueing graph and right-hand side. Additionally, negative application conditions are added to $L^F$, which ensure that for a match $m$ from $L^F$ into $SCT$, $\forall x \in L^F \setminus L^T: \nexists b \in B^{SCT}: t_B^{SCT} = m(x)$.

The application of the forward rule via $m$ thus requires that elements in $m(L^T)$ are untranslated, as indicated by the existence of bookkeeping edges, and marks these elements as translated by deleting the adjacent bookkeeping edges. Elements in $m(L^F \setminus L^T)$ are in contrast required to already be translated. Note that, in order to allow bookkeeping edges between the bookkeeping node and regular edges, a slightly extended graph model is used, which is detailed in \cite{giese2010toward}.

Figure \ref{fig:example_forward_rule} shows the forward rule derived from the TGG rule in Figure \ref{fig:example_tgg_rule}. The elements $f_1$ and $t_1$ and adjacent edges are no longer created but preserved instead. Also, the rule requires bookkeeping edges to $f_1$, $t_1$, and adjacent edges, and contains NACs that forbid the existence of bookkeeping edges to $c_1$ and $c_2$. However, this bookkeeping mechanism is omitted in the figure for readability reasons. The rule's application then deletes the bookkeeping edges to $f_1$, $t_1$, and their adjacent edges, and creates the corresponding elements in the target domain along with the linking node $cf_1$ in the correspondence domain.

\begin{figure}
\centering
\includegraphics[width=\textwidth]{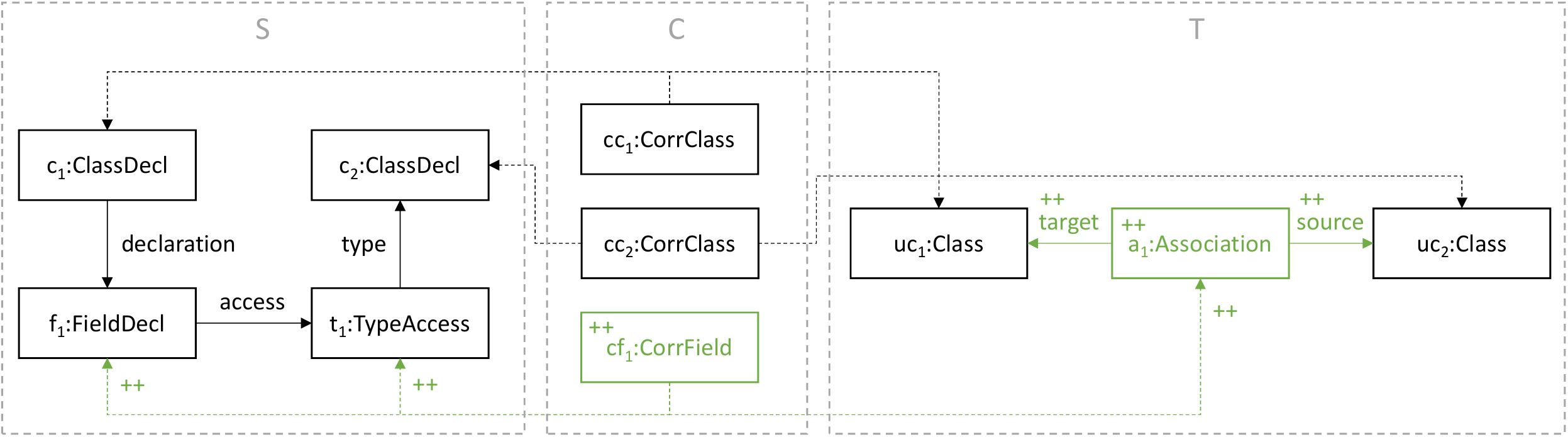}
\caption{example forward rule derived from the TGG rule in Figure \ref{fig:example_tgg_rule}, with the bookkeeping mechanism omitted for readability reasons} \label{fig:example_forward_rule}
\end{figure}

TGGs can also be used to perform a transformation from the target to the source language by means of similarly derived backward rules. In the following, we will focus on the forward case. However, the backward case simply works analogously.

A TGG without any critical pairs \cite{Ehrig+2006} among its rules is called \emph{deterministic} \cite{Giese+2014}. A forward transformation with a deterministic TGG can be executed via an operation $trans^{F}$, which simply applies the TGG's forward rules for as long as there is a match for any of them, with the order of rule applications not affecting the final result due to the absence of critical pairs. Specifically, for a deterministic TGG with a set of forward rules $R$ and a starting model triple $SCT$, any maximal rule transformation sequence $SCT \rightarrow^{R} SCT'$ constitutes a correct model transformation if it deletes all bookkeeping edges in $SCT$. Note that, if $SCT \rightarrow^{R} SCT'$ satisfies this bookkeeping criterion, every other possible maximal rule transformation sequence for $SCT$ and $R$ also satisfies the bookkeeping criterion. In this report, we will focus on such deterministic TGGs, which allow for efficient practical implementations that avoid potentially expensive undoing of forward rule applications and backtracking \cite{Giese+2014}.

%TODO: rules and rhs-morphisms currently use the same symbol!!

\subsection{Multi-version Models}

In this report, we consider models in the form of typed graphs. A model modifications can in this context be represented by a span of morphisms $M \leftarrow K \rightarrow M'$, where $M'$ is the original model, which is modified into a changed model $M'$ via an intermediate model $K$, similar to a graph transformation step \cite{taentzer2014fundamental}. A \emph{version history} of a model is then given by a set of model modifications $\Delta^{M_{\{1,...,n\}}}$ between models $M_1, M_2, ..., M_n$ with type graph $TM$. We call a version history with a unique initial version and acyclic model modification relationships between the individual versions a \emph{correct} version history.

In \cite{Barkowsky2022}, we have introduced \emph{multi-version models} as a means of encoding such a version history in a single consolidated graph. Therefore, an adapted version of $TM$, $TM_{mv}$, is created. To represent model structure, $TM_{mv}$ contains a node for each node and each edge in $TM$. Source and target relationships of edges in $TM$ are represented by edges in $TM_{mv}$. In addition, a $version$ node with a reflexive $suc$ edge is added to $TM_{mv}$, which allows the materialization of the version history's version graph. The version graph and the model structure are linked via $cv_v$ and $dv_v$ edges from each node $v$ in $TM_{mv}$ to the $version$ node.

The result of the adaptation of the type graph from Figure \ref{fig:example_graph_type_graph} is displayed in Figure \ref{fig:example_type_graph_adapted}. Note that $cv$ and $dv$ edges are omitted for readability reasons.

\begin{figure}
\centering
\includegraphics[width=0.75\textwidth]{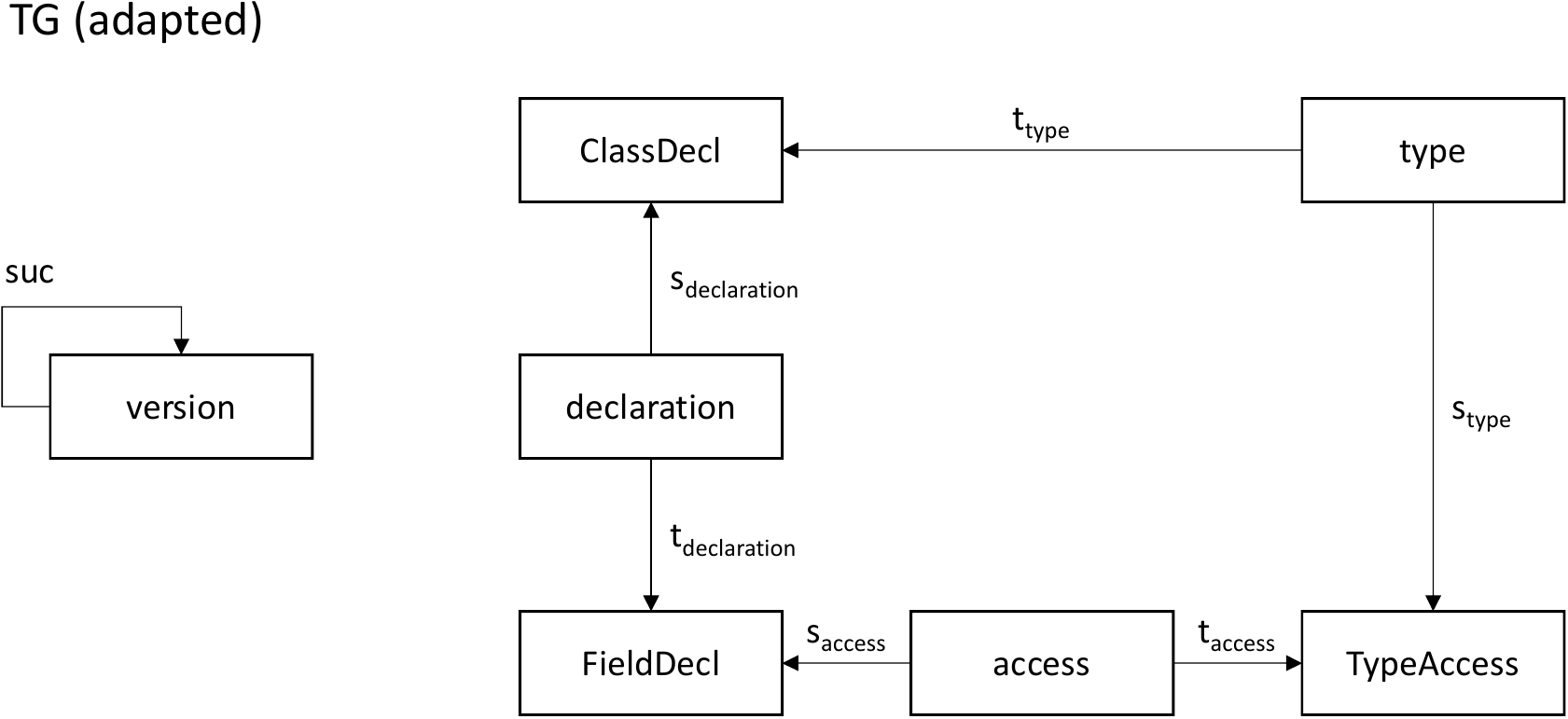}
\caption{example adapted type graph derived from the type graph in Figure \ref{fig:example_graph_type_graph}, with $cv$ and $dv$ edges omitted for readability reasons} \label{fig:example_type_graph_adapted}
\end{figure}

$TM_{mv}$ allows the translation of $\Delta^{M_{\{1,...,n\}}}$ into a single typed graph $MVM$ conforming to $TM_{mv}$, which is called a \emph{multi-version model}, via a procedure $comb$. This yields a bijective function $origin: V^{MVM} \rightarrow \bigcup_{i \in \{1, 2, ..., n\}} V^{M_i} \cup E^{M_i}$ mapping the vertices in $MVM$ to their respective original element. An individual model versions can be extracted from $MVM$ via the projection operation $proj(MVM, i) = M_i$. Finally for a vertex $v_{mv} \in V^{MVM}$, the set of model versions that include the element $origin(v_{mv})$ can be computed via the function $p$, with $p(v_{mv}) = \{M_i \in \{M_1, M_2, ..., M_n\} |  origin(v_{mv}) \in M_i\}$.

% !TeX root = ../tr_mvm_tggs.tex

\section{Derivation of Multi-version Transformation\\ Rules from Triple Graph Grammars} \label{sec:mv_rules_derivation}

The transformation of the individual model versions encoded in a multi-version model with a triple graph grammar can trivially be realized via the projection operation $proj$. However, the multi-version model may in practice afford a more compact representation compared to an explicit enumeration of all model versions, as derived via $proj$.

In such practical application scenarios, operations concerning all model versions that directly work on the multi-version model may therefore also perform better regarding execution time than the corresponding operations on individual model versions, as we have already demonstrated for the case of pattern matching for checking the well-formedness of all model versions in a version history \cite{Barkowsky2022}. Since pattern matching also constitutes an important task in model transformation via triple graph grammars, a direct, joint translation of all model versions based on the multi-version model representation seems desirable.

Given a triple graph grammar $TGG$, graph transformation rules for the joint translation of all source or target model versions encoded in a multi-version model can be derived from the regular translation rules in a straightforward manner. In the following, we will discuss the deriviation for forward translation. Rules for the backward case can be derived analogously.

First, the adapted multi-version type graph for the TGG's merged source, correspondence and target type graph is created via the translation procedure described in \cite{Barkowsky2022}. The resulting adapted type graph $TG_{mv}$ for multi-version models is extended by two additional edges, $ucv_v$ and $udv_v$, for each node $v$ in the source domain of the merged type graph. Source and target of these edges are given by $s^{TG_{mv}}(ucv_v) = s^{TG_{mv}}(udv_v) = v$ and $t^{TG_{mv}}(ucv_v) = t^{TG_{mv}}(udv_v) = version$, where $version$ is the dedicated version node in the adapted type graph.

Analogously to the bookkeeping edges in the original typegraph, these edges will be used in the translation process to encode in which versions an element represented by a node $v_{mv}$ with type $v$ has already been translated. We therefore define the set of versions where $v_{mv}$ has not been translated yet $u(v_{mv})$ analogously to the set of versions $p(v_{mv})$ where $v_{mv}$ is present, except that $ucv_v$ and $udv_v$ replace $cv_v$ and $dv_v$ in the definition.

Then, for each forward rule $r = L \xleftarrow{l} K \xrightarrow{r} R$ a corresponding multi-version forward rule is created via a procedure $adapt$, with $adapt(r) = trans'(L) \xleftarrow{l_{mv}} trans'(K) \xrightarrow{r_{mv}} trans'(R)$. The vertex morphism of $l_{mv}$ is given by $l_{mv}^V = origin^{-1}\,\circ\,l\,\circ\,origin$ and the edge morphisms by $l_{mv}^E = s \circ origin^{-1} \circ l^E \circ origin \circ s^{-1}$ and $l_{mv}^E = t \circ origin^{-1} \circ l^E \circ origin \circ t^{-1}$ for all edges representing source respectively target relationships. $r_{mv}$ is constructed analogously.

The $trans'$ procedure is a minor adaptation of the $trans$ procedure in \cite{Barkowsky2022}, which ignores the bookkeeping node, bookkeeping edges, and negative application conditions, but otherwise works analogously. The bookkeeping mechanism is instead translated into the additional constraint $P \neq \emptyset$ over $trans'(L)$, where $P = (\bigcap_{v_{mv} \in V^{trans'(L)}} p(v_{mv}) \cap \bigcap_{v_{mv} \in origin^{-1}(L^T)} u(v_{mv})) \setminus \bigcup_{v_{mv} \in V^{trans'(L)}} u(v_{mv})$.

The application of the adapted rule additionally creates outgoing $cv$ and $dv$ edges for all vertices $v^C_{mv} \in V^{trans(R)} \setminus (origin^{-1}\,\circ\,r\,\circ\,origin)(trans(K))$ to realize the assignment $p(v^C_{mv}) \coloneqq P$. Furthermore, for $v_{mv} \in origin^{-1}(r(l^{-1}(L^T)))$, the application also adds and deletes outgoing $ucv$ and $udv$ edges to realize the modification $u(v_{mv}) \coloneqq u(v_{mv}) \setminus P$.

Note that, since the computation of the $p$ and $u$ sets requires considedring paths of arbitrary length, these computations cannot technically be defined as part of the graph transformation but have to be realized externally.

For the set of forward rules $R$, the corresponding set of multi-version forward rules is then simply given by $R_{mv} = \{adapt(r) | r \in R\}$.

% !TeX root = ../tr_mvm_tggs.tex

\section{Execution of Multi-version Transformations} \label{sec:mv_transformation_execution}

The forward transformation of all model versions encoded in a multi-version model $MVM$ according to a specified TGG can jointly be performed via the TGG's set of multi-version forward rules.

In a first step, all $ucv$ and $udv$ edges present in $MVM$ are removed. Then, for each edge $e_{cv} \in E^{MVM}$ with $type(e_{cv}) = cv_{x}$ and $s^{MVM}(e_{cv})$, an edge $e_{ucv}$ with $type(e_{ucv}) = ucv_{x}$ and $s^{MVM}(e_{cv}) = s^{MVM}(e_{ucv})$ and $t^{MVM}(e_{cv}) = t^{MVM}(e_{ucv})$ is created. For all $dv$ edges, corresponding $udv$ edges are created analogously. Thus, after the creation of the $ucv$ and $udv$ edges, it holds that $\forall v_{vm} \in V^{MVM}: u(v_{vm}) = p(v_{vm})$.

Subsequently, the simultaneous transformation of all model versions encoded in $MVM$ is performed similarly to the regular transformation of a single model version via the TGG. More specifically, the adapted forward rules of the TGG are simply applied to $MVM$ until no such rule is applicable anymore.

In the following, we will argue that this transformation approach is correct in the sense that it yields the same result as the transformation of an individual model version via the regular forward rules.

Therefore, we extend the projection operation $proj$ from \cite{Barkowsky2022} to a bookkeeping-sensitive variant.

\begin{definition} \textit{(Bookkeeping-sensitive Projection)}
For a multi-version model $MVM$ with version graph $V$ and version $t \in V^V$, the bookkeeping-sensitive projection operation works similarly to the regular projection operation $proj$, except that it also adds a bookkeeping node and bookkeeping edges to an element $origin(v)$ iff $t \notin u(v)$ for all $v \in V^{MVM}$. We also denote the result of the bookkeeping-sensitive projection operation by $MVM[t] = proj^{M}(MVM, t)$.
\end{definition}

We also define two sets that represent the bookkeeping during the transformation process.

\begin{definition} \textit{(Bookkeeping Set)}
For a model $M$, we denote the set of translated elements (vertices and edges) by $B(M) = \{x \in M | \nexists b \in E'^{M}: t'^{M} = x\}$, with $E'^{M}$ the set of bookkeeping edges in $M$ and $t'^{M}$ the target function for bookkeeping edges. We also call $B(M)$ the \emph{bookkeeping set} of $M$.
\end{definition}

\begin{definition} \textit{(Projection Bookkeeping Set)}
For a multi-version model $MVM$ and version $t \in V^V$, with $V$ the version graph, we denote the set of already handled elements (vertices and edges) in $MVM[t]$ by $B_{mv}(MVM[t]) = \{x \in MVM[t] | t \notin u(proj^{-1}(x))\}$. We also call $B_{mv}(MVM[t])$ the \emph{projection bookkeeping set} of $MVM[t]$.
\end{definition}

The following theorem states that, at the start of the transformation process via adapted forward rules, the prepared multi-version model via the bookkeeping-sensitive projection correctly encodes the starting situation for the translation of the individual model versions.

\begin{theorem} \label{the:correctness_bookkeeping_projection}
Given a multi-version model $MVM$ encoding a version history with model versions $M_1, M_2, ..., M_n$ such that $\forall v_{vm} \in V^{MVM}: u(v_{vm}) = p(v_{vm})$, it holds that
$\forall t \in \{1, 2, ..., n\}: MVM[t] = init_F(M_t)$
up to isomorphism, where $init_F(SCT_t)$ denotes the graph with bookkeeping resulting from the preparation of $M_t$ for the regular forward transformation process, that is, the graph $M_t$ with an added bookkeeping node and bookkeeping edges to all elements in $M_t$.
\end{theorem}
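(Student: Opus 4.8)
The plan is to prove the claimed equality $MVM[t] = init_F(M_t)$ by verifying that the two graphs agree on their underlying model structure and on their bookkeeping annotations separately, since both are ultimately typed graphs with an added bookkeeping node and bookkeeping edges. I would begin by recalling from \cite{Barkowsky2022} that the regular projection operation satisfies $proj(MVM, t) = M_t$ up to isomorphism, so that the underlying typed graph of $MVM[t]$ (obtained by stripping the bookkeeping node and edges) is already isomorphic to $M_t$. Fix the corresponding isomorphism $\phi: proj(MVM, t) \rightarrow M_t$. Since $init_F(M_t)$ is, by definition, exactly $M_t$ with a bookkeeping node and bookkeeping edges to \emph{all} of its elements, the only remaining task is to show that $\phi$ extends to an isomorphism of the bookkeeping-augmented graphs, i.e. that the bookkeeping-sensitive projection places a bookkeeping edge on precisely those elements of $proj(MVM, t)$ whose images under $\phi$ carry a bookkeeping edge in $init_F(M_t)$ — which is all of them.

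The heart of the argument is therefore the following equivalence: for every $v \in V^{MVM}$ with $M_t \in p(v)$ (equivalently, with $origin(v)$ surviving into the version-$t$ projection), the element $origin(v)$ receives a bookkeeping edge in $MVM[t]$ if and only if it receives one in $init_F(M_t)$. By the definition of bookkeeping-sensitive projection, $origin(v)$ gets a bookkeeping edge in $MVM[t]$ exactly when $t \notin u(v)$. By the hypothesis of the theorem, $u(v) = p(v)$ for all $v_{mv} \in V^{MVM}$, so $t \notin u(v)$ is equivalent to $t \notin p(v)$. The key step is then to note that for any element actually present in the projection $proj(MVM, t)$ we have $t \in p(v)$ by the very definition of $p$ (as the set of versions containing $origin(v)$) and of $proj$. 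This produces an apparent tension that must be resolved carefully: I would argue that the definition of the bookkeeping-sensitive projection is to be read as annotating exactly those projected elements, and that under $u = p$ every projected element $v$ satisfies $t \in p(v) = u(v)$, so by the definition the bookkeeping edge is added to each of them — matching $init_F$, which bookkeeps \emph{every} element. (If instead the intended reading makes the condition $t \notin u(v)$, the same $u = p$ substitution is what forces the two annotation patterns to coincide; either way the substitution $u = p$ is the linchpin.)

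Concretely, the steps in order would be: (1) invoke $proj(MVM, t) = M_t$ to fix the structural isomorphism $\phi$; (2) observe that $init_F(M_t)$ adds a bookkeeping node and a bookkeeping edge to every element of $M_t$; (3) characterize, via the definition of bookkeeping-sensitive projection, the set of elements of $MVM[t]$ carrying bookkeeping edges as $\{origin(v) \mid v \text{ projected to } t,\ t \text{ in the relevant } u\text{-condition}\}$; (4) apply the hypothesis $u = p$ to rewrite this $u$-condition in terms of $p$; and (5) use the defining property of $p$ and $proj$ — that $v$ projects to version $t$ iff $M_t \in p(v)$ — to conclude that the bookkeeping edges of $MVM[t]$ land on exactly the images under $\phi$ of all elements of $M_t$, so that $\phi$ lifts to the desired isomorphism. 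I would handle vertices and the two edge cases ($cv/ucv$ versus $dv/dv$ relationships) uniformly, since $origin$ is bijective on both nodes and edges and the $u$ sets are defined element-wise just as the $p$ sets are.

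The main obstacle I anticipate is pinning down the exact semantics of the bookkeeping-sensitive projection definition and reconciling it with the "bookkeeps everything" behavior of $init_F$; the definition as stated phrases the condition as $t \notin u(v)$, whereas $init_F$ annotates \emph{all} elements, so the proof must lean entirely on the hypothesis $u = p$ together with the fact that a projected element $v$ always has $t \in p(v)$ to make the two sets of annotated elements coincide. Once that reconciliation is made precise, the rest is a routine transport-of-structure argument along $\phi$ and $origin$, with no essential difficulty beyond bookkeeping the vertex/edge cases of the morphisms $l_{mv}$ and $r_{mv}$ inherited from the construction in Section~\ref{sec:mv_rules_derivation}.
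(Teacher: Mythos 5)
Your proposal is correct and takes essentially the same approach as the paper, whose entire proof is a one-liner citing $proj(MVM, t) = M_t$ from \cite{Barkowsky2022} plus the definition of the bookkeeping-sensitive projection; your steps (1)--(5) are precisely an expansion of that argument, with the hypothesis $u = p$ doing the work of making every projected element carry a bookkeeping edge, exactly as in $init_F(M_t)$. One caution: your ``either way'' parenthetical is wrong --- under the literal reading $t \notin u(v)$ of the definition (evidently a typo for $t \in u(v)$, since $u$ collects the \emph{untranslated} versions and bookkeeping edges mark untranslated elements), the substitution $u = p$ would make $MVM[t]$ annotate \emph{no} projected element while $init_F(M_t)$ annotates all of them, so the two annotation patterns would be complementary rather than coincident and the theorem would be false; only your primary reading makes the argument go through.
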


\begin{proof}
Follows directly from the fact that $\forall t \in \{1, 2, ..., n\}: proj(MVM, t) = M_t$, which has been shown in \cite{Barkowsky2022}, and the definition of the bookkeeping-sensitive projection operation.
\end{proof}

By Theorem \ref{the:correctness_bookkeeping_projection}, we also get the following corollary:

\begin{corollary} \label{cor:correctness_bookkeeping_projection_sets}
Given a multi-version model $MVM$ encoding a version history with model versions $M_1, M_2, ..., M_n$ such that $\forall v_{vm} \in V^{MVM}: u(v_{vm}) = p(v_{vm})$, it holds that
$\forall t \in \{1, 2, ..., n\}: B_{mv}(MVM[t]) = B(init_F(M_t))$
up to isomorphism, where $init_F(SCT_t)$ denotes the graph with bookkeeping resulting from the preparation of $M_t$ for the regular forward transformation process, that is, the graph $M_t$ with an added bookkeeping node and bookkeeping edges to all elements in $M_t$.
\end{corollary}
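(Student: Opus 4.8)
The plan is to derive the corollary as a direct consequence of Theorem~\ref{the:correctness_bookkeeping_projection} by observing that the \emph{projection bookkeeping set} $B_{mv}(MVM[t])$ is nothing other than the ordinary bookkeeping set $B$ of the graph $MVM[t]$ produced by the bookkeeping-sensitive projection. Concretely, I would split the argument into two steps: first show $B_{mv}(MVM[t]) = B(MVM[t])$ purely from the definitions of the two bookkeeping sets and of the bookkeeping-sensitive projection, and then transport $B(MVM[t])$ to $B(init_F(M_t))$ across the isomorphism $MVM[t] \cong init_F(M_t)$ already supplied by Theorem~\ref{the:correctness_bookkeeping_projection} under the hypothesis $\forall v_{vm} \in V^{MVM}: u(v_{vm}) = p(v_{vm})$.

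For the first step I would unfold both definitions. By definition, $B(MVM[t])$ collects exactly those elements that are not the target of an incoming bookkeeping edge, while the bookkeeping-sensitive projection attaches a bookkeeping edge to an element $origin(v)$ precisely in those versions in which $v$ is still recorded as untranslated, that is, as governed by $u(v)$. Hence an element $x = origin(v)$ is left free of an incoming bookkeeping edge in $MVM[t]$ exactly when $t \notin u(proj^{-1}(x))$, which is verbatim the membership condition appearing in the definition of $B_{mv}(MVM[t])$. The two sets therefore coincide. For the second step, Theorem~\ref{the:correctness_bookkeeping_projection} gives an isomorphism that matches up the bookkeeping node and bookkeeping edges of $MVM[t]$ and $init_F(M_t)$; since $B$ is determined solely by the bookkeeping-edge structure, it is preserved by this isomorphism, yielding $B(MVM[t]) = B(init_F(M_t))$ up to isomorphism. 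Chaining the two steps establishes $B_{mv}(MVM[t]) = B(init_F(M_t))$.

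The main obstacle, such as it is, lies entirely in the first step: one must verify that the predicate $t \notin u(proj^{-1}(x))$ defining $B_{mv}$ is exactly the condition under which the bookkeeping-sensitive projection leaves $x$ without an incoming bookkeeping edge. This amounts to lining up the semantics of the $u$-sets with that of bookkeeping edges (a bookkeeping edge marking an untranslated element) and identifying $proj^{-1}$ with $origin^{-1}$ up to the isomorphism of Theorem~\ref{the:correctness_bookkeeping_projection}. A minor subtlety is the bookkeeping node itself, which is synthesized by the projection and is not the $origin$-image of any vertex of $MVM$; since it carries no incoming bookkeeping edge on either side and is matched by the isomorphism, its treatment is consistent throughout. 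Once these identifications are made explicit, the corollary follows as immediately as Theorem~\ref{the:correctness_bookkeeping_projection} itself.
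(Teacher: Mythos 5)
Your proposal is correct and takes essentially the same route as the paper: the paper's proof is a one-line appeal to Theorem~\ref{the:correctness_bookkeeping_projection} together with the definitions of bookkeeping set and projection bookkeeping set, which is exactly what your two steps spell out in detail. Your explicit reading of the bookkeeping-sensitive projection (bookkeeping edges attached precisely to elements still untranslated in version $t$, i.e.\ governed by $u$) and your consistent treatment of the synthesized bookkeeping node match the paper's intent, just made more explicit.
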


\begin{proof}
Follows directly from Theorem \ref{the:correctness_bookkeeping_projection} and the definition of bookkeeping set and projection bookkeeping set.
\end{proof}

We now show that a multi-version rule is applicable to a multi-version model iff the corresponding regular rule is applicable to all individual model versions affected by the rule application.

\begin{theorem} \label{the:mv_rule_applicability}
A multi-version forward rule $r_{mv} = L_{mv} \leftarrow K_{mv} \rightarrow R_{mv}$ is applicable to a multi-version model triple $SCT_{mv}$ with bookkeeping via match $m$, if and only if for all $t \in P$, the associated original forward rule $r = L \leftarrow K \rightarrow R$ is applicable to $SCT_{mv}[t]$ via match $trans(m)$, with $P = \bigcap_{v \in V^{L_{mv}}} p(m(v)) \cap \bigcap_{v \in V^{L^{T}_{mv}}} u(m(v))$.
\end{theorem}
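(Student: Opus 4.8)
The plan is to prove the biconditional by reducing the applicability of a multi-version rule to the applicability of the original rule on each relevant projection, exploiting the structural correspondence established by the $trans'$ and $adapt$ procedures together with the bookkeeping bookkeeping-sensitive projection. Recall that applicability of a (typed) graph transformation rule via a match $m$ amounts to two conditions: $m$ is a valid match satisfying the rule's application conditions (including the bookkeeping NACs, here encoded for $r_{mv}$ by the side constraint $P \neq \emptyset$), and $m$ satisfies the dangling condition so that the double pushout exists. I would therefore split the argument into a structural-match part and a dangling-condition part, and handle each direction of the equivalence by transporting these two conditions across the projection operation.

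First I would establish the structural match correspondence. By construction, $L_{mv} = trans'(L)$, and the vertex and edge morphisms of $adapt(r)$ are defined via conjugation by $origin$, so a match $m : L_{mv} \rightarrow SCT_{mv}$ induces, for each version $t$, a candidate match $trans(m)$ into the projection $SCT_{mv}[t]$ by composing with the projection's structural embedding. Here I would lean on the fact, implicit in the definition of $proj$ and spelled out in \cite{Barkowsky2022}, that $origin$ is a bijection and that projection commutes with the $trans'/trans$ translation, so that image elements of $m$ that are present in version $t$ (i.e. those $v$ with $t \in p(m(v))$) map to genuine elements of $SCT_{mv}[t]$. The set $P = \bigcap_{v \in V^{L_{mv}}} p(m(v)) \cap \bigcap_{v \in V^{L^{T}_{mv}}} u(m(v))$ is precisely the set of versions in which (a) every matched left-hand-side element is present, and (b) every element of the to-be-translated subgraph $L^T_{mv}$ is still untranslated. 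Condition (a) is exactly what is needed for $trans(m)$ to be a well-defined monomorphism into $SCT_{mv}[t]$, and condition (b), via the definition of the bookkeeping-sensitive projection, is exactly the statement that the NACs of the original forward rule $r$ are satisfied at $trans(m)$ in $SCT_{mv}[t]$ (the untranslated-element bookkeeping edges are present on the elements of $L^T$ and absent on the elements of $L \setminus L^T$). Thus for each $t \in P$ the original match conditions hold, and conversely membership in $P$ is forced whenever $r$ is applicable at $trans(m)$ in $SCT_{mv}[t]$; chaining these equivalences over all $t$ gives the match part of the theorem, with the side constraint $P \neq \emptyset$ on the multi-version side matching the nonemptiness of the index set on the original side.

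Next I would verify the dangling condition compatibility. A multi-version forward rule derived from a TGG forward rule is a production on the non-bookkeeping elements (it only creates correspondence and target structure and deletes $ucv/udv$ edges, which are handled externally rather than as genuine graph deletions), so the dangling condition is essentially vacuous for the created structure; the deletions concern only bookkeeping-style edges. I would argue that because the structural (non-bookkeeping) part of $r_{mv}$ deletes nothing that could leave a dangling edge, and because projection preserves incidence, the dangling condition is satisfied at $m$ in $SCT_{mv}$ if and only if it is satisfied at $trans(m)$ in each $SCT_{mv}[t]$ for $t \in P$. This step is largely bookkeeping once the match correspondence is in place.

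The main obstacle I anticipate is the careful treatment of the side constraint $P \neq \emptyset$ versus the per-version NAC satisfaction, together with making the conjugation-by-$origin$ definitions of $l_{mv}$, $r_{mv}$ interact cleanly with the projection. Specifically, the subtle point is that $u(m(v))$ and $p(m(v))$ are defined through paths of arbitrary length in the version graph and are computed externally, so I must be careful that the bookkeeping-sensitive projection $SCT_{mv}[t]$ really does place a bookkeeping edge on $origin(m(v))$ exactly when $t \notin u(m(v))$, and that this coincides with the NAC semantics of the original forward rule. I would isolate this as a lemma-style observation, appealing to the definitions of $u$, of the bookkeeping-sensitive projection, and of the translated subgraph $L^T$, and then the two directions of the theorem follow by instantiating the match correspondence and dangling compatibility at each $t \in P$ and reading the constraint $P \neq \emptyset$ as the existence of at least one such version.
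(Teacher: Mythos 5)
Your proposal follows the same decomposition as the paper's proof: (i) the per-version match correspondence $m \mapsto trans(m)$ obtained from the projection result of \cite{Barkowsky2022}, (ii) a translation of the bookkeeping requirements into conditions on the $p$- and $u$-sets, and (iii) the observation that the dangling condition is trivial because forward rules and their multi-version adaptations delete no vertices. Parts (i) and (iii) are sound. The gap is in part (ii): you assert that condition (b), i.e.\ $t \in \bigcap_{v \in V^{L^{T}_{mv}}} u(m(v))$, is ``exactly'' the satisfaction of the original forward rule's bookkeeping conditions at $trans(m)$, \emph{including} the negative application condition that bookkeeping edges are absent on the elements of $trans(m)(L \setminus L^{T})$. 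It is not. Condition (b) quantifies only over $L^{T}_{mv}$, the to-be-translated part, and says nothing about the context elements in $L_{mv} \setminus L^{T}_{mv}$, which the original rule requires to be \emph{already translated}.

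Concretely, take a version $t$ in which every matched element is present (condition (a)) and every element of $m(L^{T}_{mv})$ is untranslated (condition (b)), but some context element of $m(L_{mv} \setminus L^{T}_{mv})$ is also still untranslated in $t$. Then $t$ lies in the theorem's $P$, yet $r$ is \emph{not} applicable to $SCT_{mv}[t]$ via $trans(m)$: the bookkeeping-sensitive projection places a bookkeeping edge on that context element, violating the NAC of the original forward rule. So the direction from multi-version applicability to per-version applicability does not go through as you argue it. The paper closes this hole with a separate, third equivalence — all elements of $trans(m)(L \setminus L^{T})$ have no adjacent bookkeeping edge in $SCT_{mv}[t]$ iff $t \notin \bigcup_{v \in V^{L_{mv} \setminus L^{T}_{mv}}} u(m(v))$ — and this condition is enforced not by membership in the theorem's $P$ but by the applicability constraint built into $r_{mv}$ in Section \ref{sec:mv_rules_derivation}, whose definition of $P$ there subtracts a union of $u$-sets precisely to exclude versions with untranslated context elements. (The mismatch between that definition and the $P$ appearing in the theorem statement is an imprecision of the paper itself, but any complete proof must state and use this third condition explicitly; your proof instead folds it into (b), where it does not follow.) Your closing instinct to reconcile the side constraint $P \neq \emptyset$ with the ``for all $t \in P$'' quantification is a genuine subtlety worth isolating, but it cannot be resolved without the missing condition.
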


\begin{proof}
For a version $t$, as we have already shown in \cite{Barkowsky2022}, the match $m : L_{mv} \rightarrow SCT_{mv}$ has a corresponding match $trans(m) : L \rightarrow SCT_{mv}[t]$ if and only if $t \in \bigcap_{v \in V^{L_{mv}}} p(m(v))$. Furthermore, due to the definition of $P$ and the construction of $r_{mv}$, all elements in $m(trans(m)(L^{T}))$ have an adjacent bookkeeping edge in $SCT_{mv}[t]$ iff $t \in \bigcap_{v \in V^{L^{T}_{mv}}} u(m(v))$. Similarly, all elements in $m(trans(m)(L \setminus L^{T}))$ have no adjacent bookkeeping edge in $SCT_{mv}[t]$ iff $t \notin \bigcup_{v \in V^{L_{mv} \setminus L^{T}_{mv}}} u(m(v))$. Since $r$ and $r_{mv}$ delete no vertices, the dangling condition is trivially satisfied for $r$ and the match $trans(m)$. $r_{mv}$ is hence applicable to $SCT_{mv}$ via $m$, with $t \in P$, iff $r$ is applicable to $SCT_{mv}[t]$ via $trans(m)$.
\end{proof}

We can now show the equivalence of a single multi-version rule application to a multi-version model to the application of the corresponding regular rule to all affected model versions.

\begin{theorem} \label{the:mv_rule_application}
For an application $SCT_{mv} \rightarrow^{r_{mv}}_{m} SCT_{mv}'$ of a multi-version forward rule $r_{mv} = L_{mv} \leftarrow K_{mv} \rightarrow R_{mv}$ with original forward rule $r = L \leftarrow K \rightarrow R$ to a multi-version model triple $SCT_{mv}$ with bookkeeping and version graph $V$ via match $m$, it holds that $\forall t \in P: SCT_{mv}'[t] = SCT' \wedge B_{mv}(SCT_{mv}'[t]) = B(SCT')$ up to isomorphism, with the corresponding application $SCT_{mv}[t] \rightarrow^{r}_{trans(m)} SCT'$. Furthermore, $\forall t \in V^V \setminus P: SCT_{mv}'[t] = SCT_{mv}[t] \wedge B_{mv}(SCT_{mv}'[t]) = B(SCT_{mv}[t])$ up to isomorphism, where $P = \bigcap_{v \in V^{L_{mv}}} p(m(v)) \cap \bigcap_{v \in V^{L^{T}_{mv}}} u(m(v))$.
\end{theorem}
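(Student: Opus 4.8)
The plan is to split the verification into the two disjoint cases indicated by the statement — the versions $t \in P$ that participate in the rule application, and the versions $t \in V^V \setminus P$ that do not — and to establish both the structural equality ($SCT_{mv}'[t] = SCT'$ or $= SCT_{mv}[t]$) and the bookkeeping-set equality in each case. First I would set up the situation using Theorem \ref{the:mv_rule_applicability}: since $t \in P$ means in particular that $t \in \bigcap_{v \in V^{L_{mv}}} p(m(v)) \cap \bigcap_{v \in V^{L^{T}_{mv}}} u(m(v))$, the match $m$ projects to a valid match $trans(m)$ on $SCT_{mv}[t]$ and the original rule $r$ is applicable there, so the corresponding application $SCT_{mv}[t] \rightarrow^{r}_{trans(m)} SCT'$ exists and is well-defined for exactly these $t$.

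For the structural equality in the case $t \in P$, I would argue that projection commutes with rule application. Recall from the derivation in Section \ref{sec:mv_rules_derivation} that $adapt$ constructs $r_{mv}$ so that its created vertices $v^C_{mv}$ receive $p(v^C_{mv}) \coloneqq P$ via freshly created $cv$/$dv$ edges, while preserved elements keep their version sets. Hence for $t \in P$, every element created by $r_{mv}$ is present in version $t$ after the application, and the newly created fragment projects under $proj(\cdot, t)$ to exactly the fragment created by $r$ acting on $SCT_{mv}[t]$. Using that $proj(SCT_{mv}, t) = SCT_{mv}[t]$ up to the bookkeeping overlay (the content of Theorem \ref{the:correctness_bookkeeping_projection}) together with the fact that $r$ and $r_{mv}$ are productions creating structurally identical target/correspondence elements via the $origin$-transported morphisms, I would conclude $SCT_{mv}'[t] = SCT'$ up to isomorphism. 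The cleanest way to phrase this is: projection is a functor-like operation that maps the double pushout defining the $r_{mv}$-application to the double pushout defining the $r$-application, restricted to version $t$.

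For the bookkeeping-set equality $B_{mv}(SCT_{mv}'[t]) = B(SCT')$ when $t \in P$, I would track the $ucv$/$udv$ edges. The $adapt$ construction performs $u(v_{mv}) \coloneqq u(v_{mv}) \setminus P$ for every $v_{mv} \in origin^{-1}(r(l^{-1}(L^T)))$, i.e. it removes $t$ from the untranslated set of precisely the elements matched by $L^T$; correspondingly, the original application of $r$ deletes the bookkeeping edges adjacent to the elements in $trans(m)(L^T)$. By the definition of $B_{mv}$ (an element is handled iff $t \notin u(\cdot)$) and of $B$ (an element is handled iff it has no adjacent bookkeeping edge), these two modifications are mirror images, so the handled/translated sets coincide after the step. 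Newly created target and correspondence elements, which in the single-version forward semantics come without bookkeeping edges, correspond on the multi-version side to created vertices whose $u$-set is empty for $t$ (since only source-domain nodes carry $ucv$/$udv$ edges), and I would check that these are consistently counted as handled on both sides. Combining with Corollary \ref{cor:correctness_bookkeeping_projection_sets} for the pre-application state gives the equality.

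The case $t \in V^V \setminus P$ is where the argument is most delicate, and I expect it to be the main obstacle. Here I must show the rule application leaves version $t$ entirely untouched, both structurally ($SCT_{mv}'[t] = SCT_{mv}[t]$) and in bookkeeping ($B_{mv}(SCT_{mv}'[t]) = B(SCT_{mv}[t])$). The structural part requires that no created element of $r_{mv}$ is present in a version outside $P$: since created vertices get $p(v^C_{mv}) \coloneqq P$, their $origin$ lies in $M_t$ only if $t \in P$, so they vanish under $proj(\cdot, t)$ for $t \notin P$ — this I would make precise. The subtle point is the bookkeeping: I must confirm that the $u$-set modifications $u(v_{mv}) \coloneqq u(v_{mv}) \setminus P$ change nothing for $t \notin P$, which is immediate set-theoretically since removing elements of $P$ from a set cannot alter membership of $t \notin P$; hence $t \notin u(v_{mv})$ is preserved in both directions, and $B_{mv}$ is unchanged for version $t$. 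I would also need to note that elements not present in version $t$ at all (those with $t \notin p$) never contribute to $SCT_{mv}[t]$ regardless, so the only concern is preserved elements with $t \in p(v_{mv})$, for which both $p$ and $u$ are demonstrably invariant under the step. Assembling these observations yields the second conjunction and completes the proof.
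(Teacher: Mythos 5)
Your proposal is correct and follows essentially the same route as the paper's proof: the same case split on $t \in P$ versus $t \in V^V \setminus P$, with structural equality derived from the fact that $adapt$ assigns $p \coloneqq P$ to created elements (so they vanish under projection for $t \notin P$ and project onto the elements created by $r$ for $t \in P$), and bookkeeping equality from the mirror correspondence between the update $u \coloneqq u \setminus P$ and the deletion of bookkeeping edges adjacent to $trans(m)(L^T)$. The only cosmetic difference is that the paper constructs the isomorphism for $t \in P$ explicitly (identity on the common subgraph $SCT_{mv}[t]$ plus $n' \circ trans \circ n^{-1} \circ origin$ on the created elements, using the two comatches), where you appeal to projection commuting with the double pushout.
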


\begin{proof}
Disregarding bookkeeping edges, all forward rules and thus also the adapted forward rules are productions. Due to the construction of the adapted forward rules, all elements created by the rule's application are only mv-present in $SCT_{mv}'$ for the versions in $P$. Therefore, for all remaining versions, $SCT_{mv}[t]$ contains the same elements as $SCT_{mv}'[t]$. An isomorphism $iso: SCT_{mv}[t] \rightarrow SCT_{mv}'[t]$ is hence trivially given by the identity in this case. Since the application of $r_{mv}$ only changes the projection bookkeeping sets for versions in $P$, $B_{mv}(SCT_{mv}'[t]) = B(SCT_{mv}[t])$ with isomorphism $iso$.

It thus holds up to isomorphism that $\forall t \in V^V \setminus P: SCT_{mv}'[t] = SCT_{mv}[t] \wedge B_{mv}(SCT_{mv}'[t]) = B(SCT_{mv}[t])$.

The application of $r_{mv}$ to $SCT_{mv}$ yields a comatch $n : R_{mv} \rightarrow SCT_{mv}'$ and the associated application of $r$ to $SCT_{mv}[t]$ similarly yields a comatch $n' : R \rightarrow SCT'$ for any $t \in P$.

An isomorphism $iso: SCT_{mv}'[t] \rightarrow SCT'$ can then be constructed as follows: Since $r_{mv}$ is a production, $SCT_{mv}$ is a subgraph of $SCT_{mv}'$ and hence $SCT_{mv}[t]$ is also a subgraph of $SCT_{mv}'[t]$. Since $r$ is a production, $SCT_{mv}[t]$ is also a subgraph of $SCT'$. Isomorphic mappings for $SCT_{mv}[t]$ between $SCT_{mv}'[t]$ and $SCT'$ are thus simply given by the identity. This leaves only the elements in $n(R_{mv} \setminus L_{mv})$ and the elements in $n'(R \setminus L)$ unmapped. Due to the construction of $r_{mv}$ being unique up to isomorphism, $n$ and $n'$ being monomorphisms, and $trans$ and $origin$ being bijections, the remaining isomorphic mappings are given by $n' \circ trans \circ n^{-1} \circ origin$. Note that for elements in $n(L_{mv})$, the definition of $iso$ via identity and $n' \circ trans \circ n^{-1} \circ origin$ is redundant but compatible.

Due to the definition of bookkeeping-sensitive projection, bookkeeping set, and projection bookkeeping set, it holds that $B(SCT_{mv}[t]) = B_{mv}(SCT_{mv}[t])$ and thus $B_{mv}(SCT_{mv}[t]) = B(SCT_{mv}[t]))$. Compared to $B_{mv}(SCT_{mv}[t])$, the application of $r_{mv}$ only changes the projection bookkeeping set $B_{mv}(SCT_{mv}'[t])$ by adding the elements in $trans(m(L^{T}_{mv}))$. The modification to $B_{mv}(SCT_{mv}'[t])$ hence corresponds to the modification of the bookkeeping set $B(SCT')$ by the application of $r$ via $trans(m)$ for the isomorphism $iso$ due to the construction of $r_{mv}$.

It thus holds that $\forall t \in P: SCT_{mv}'[t] = SCT' \wedge B_{mv}(SCT_{mv}'[t]) = B(SCT')$.
\end{proof}

Based on Theorem \ref{the:mv_rule_application} for individual rule applications, we get the following corollary for sequences of rule applications:

\begin{corollary} \label{cor:mv_rule_application_sequence}
For a TGG with associated set of forward rules $R$ and multi-version forward rules $R_{mv}$ and a multi-version model triple $SCT_{mv}$ with bookkeeping and version graph $V$, there is a sequence of rule applications $SCT_{mv} \rightarrow^{R_{mv}} SCT_{mv}'$ if and only if for all $t \in V^V$, there is a sequence of rule applications $SCT_{mv}[t] \rightarrow^{R} SCT'$ with $SCT_{mv}'[t] = SCT' \wedge iso(B_{mv}(SCT_{mv}'[t])) = B(SCT')$, where $iso$ is an isomorphism from $SCT_{mv}'[t]$ into $SCT'$.
\end{corollary}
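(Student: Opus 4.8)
The plan is to prove Corollary~\ref{cor:mv_rule_application_sequence} by induction on the length of the rule application sequence, using Theorem~\ref{the:mv_rule_application} as the single-step bridge between the multi-version world and the projected single-version worlds. The base case (sequence of length zero) is immediate: $SCT_{mv} = SCT_{mv}'$ forces $SCT_{mv}[t] = SCT_{mv}'[t]$ for every $t$, and we may take $SCT' = SCT_{mv}[t]$ with $iso$ the identity, so that both the isomorphism condition and the bookkeeping condition hold trivially by the definitions of projection and projection bookkeeping set.

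For the inductive step I would split into the two directions of the biconditional. For the forward direction, assume $SCT_{mv} \rightarrow^{R_{mv}} SCT_{mv}''$ decomposes as $SCT_{mv} \rightarrow^{R_{mv}} SCT_{mv}'$ (handled by the induction hypothesis, yielding for each $t$ a sequence $SCT_{mv}[t] \rightarrow^{R} SCT'$ with the required isomorphism) followed by a single application $SCT_{mv}' \rightarrow^{r_{mv}}_{m} SCT_{mv}''$ of some $r_{mv} \in R_{mv}$ via match $m$. Let $P = \bigcap_{v \in V^{L_{mv}}} p(m(v)) \cap \bigcap_{v \in V^{L^{T}_{mv}}} u(m(v))$ be the version set of this last step. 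For each version $t \in V^V$ I would apply Theorem~\ref{the:mv_rule_application} to the final step: if $t \in P$, the theorem gives a corresponding single-step application $SCT'[t'] \rightarrow^{r}_{trans(m)} SCT''$ (after transporting the match along the isomorphism from the induction hypothesis) extending the sequence to $SCT_{mv}[t] \rightarrow^{R} SCT''$, together with $SCT_{mv}''[t] = SCT''$ and matching bookkeeping sets up to isomorphism; if $t \notin P$, the theorem's second clause gives $SCT_{mv}''[t] = SCT_{mv}'[t]$ with unchanged projection bookkeeping set, so the sequence from the induction hypothesis already witnesses the claim. Composing the isomorphisms from the induction hypothesis with those supplied by Theorem~\ref{the:mv_rule_application} then yields the required $iso$ for $SCT_{mv}''$.

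For the reverse direction the argument runs symmetrically but requires more care about \emph{which} multi-version rule to apply. Given, for every $t \in V^V$, a single-version sequence $SCT_{mv}[t] \rightarrow^{R} SCT'$, I would argue that whenever some regular rule $r$ is applicable to a projection $SCT_{mv}'[t]$ via a match, the applicability characterization of Theorem~\ref{the:mv_rule_applicability} guarantees that the corresponding multi-version rule $r_{mv} = adapt(r)$ is applicable to $SCT_{mv}'$ via the lifted match $m$ with $trans(m)$ recovering the original, and the resulting $P$ is nonempty (it contains $t$). Applying that multi-version rule once and invoking Theorem~\ref{the:mv_rule_application} keeps every projection in lockstep with its single-version counterpart, which lets the induction proceed. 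The main obstacle I anticipate is bookkeeping-driven termination and synchronization: the multi-version sequence must be arranged so that it simultaneously realizes the individual single-version sequences across \emph{all} versions, and one must verify that no version is left with an applicable regular rule while the multi-version sequence has stalled, nor vice versa. This relies on the determinism assumption for the TGG (absence of critical pairs), which guarantees that the final result is independent of the order of rule applications, so that any interleaving reproducing the per-version effects suffices; I would invoke this explicitly to argue that the multi-version sequence can be built to be maximal exactly when every projected sequence is maximal, closing the equivalence.
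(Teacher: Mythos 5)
Your proposal follows the same skeleton as the paper's proof: induction over the length of the multi-version application sequence, with Theorem~\ref{the:mv_rule_application} as the single-step bridge, the same trivial base case, and the same case split on $t \in P$ versus $t \in V^V \setminus P$ --- your forward direction matches the paper essentially verbatim. Where you diverge is the reverse direction. The paper does not treat it as a separate construction at all: it states the induction hypothesis and the induction step as biconditionals and lets the ``if and only if'' of Theorem~\ref{the:mv_rule_application} (an extension of the multi-version sequence by a rule application via match $m$ exists iff the corresponding projected extensions via $trans(m)$ exist for the versions in $P$) carry both directions of the equivalence simultaneously; neither Theorem~\ref{the:mv_rule_applicability} nor determinism of the TGG is invoked anywhere in that proof. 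Your route --- lifting a regular match via Theorem~\ref{the:mv_rule_applicability} and then appealing to the absence of critical pairs to keep all versions in lockstep --- is genuinely different, and it is more explicit about a real subtlety the paper glosses over: a lifted multi-version application acts on \emph{all} versions in $P$ at once, so arbitrary families of per-version sequences cannot be realized independently; the paper's biconditional phrasing quietly sidesteps this by only ever relating multi-version sequences to the families of projected sequences they induce. However, your treatment has two costs. First, it imports the determinism assumption, which this corollary (unlike the surrounding discussion and Corollary~\ref{cor:mv_max_rule_application_sequence}) does not require, since it quantifies over arbitrary, not maximal, sequences. Second, your closing concern --- that no version be left with an applicable regular rule while the multi-version sequence has stalled --- is precisely the maximality criterion of Corollary~\ref{cor:mv_max_rule_application_sequence}, not part of this statement, so that portion of your argument is out of scope here and should be deferred; conflating the two makes your reverse direction appear to need machinery (termination, maximality synchronization) that the claim being proved never mentions.
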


\begin{proof}
We prove the corollary by induction over the length of the multi-version rule application sequence.

For the base case of application sequences of length 0, the identity morphism and empty application sequences trivially satisfy the corollary.

If there is a sequence of rule applications $SCT_{mv} \rightarrow^{R_{mv}} SCT_{mv}'$ if and only if for all $t \in V^V$, there is a sequence of rule applications $SCT_{mv}[t] \rightarrow^{R} SCT'$ with $SCT_{mv}'[t] = SCT' \wedge iso(B_{mv}(SCT_{mv}'[t])) = B(SCT')$, by Theorem \ref{the:mv_rule_application} we have an extended multi-version sequence $SCT_{mv} \rightarrow^{R_{mv}} SCT_{mv}' \rightarrow^{r_{mv}}_{m} SCT_{mv}''$ and all $t \in V^V$ if and only if for all $t \in V^V$, there is a sequence of regular rule applications $SCT_{mv}[t] \rightarrow^{R} SCT''$ with $SCT_{mv}''[t] = SCT'' \wedge iso(B_{mv}(SCT_{mv}''[t])) = B(SCT'')$.

For all $t \in V^V \setminus P$, where $P = \bigcap_{v \in V^{L_{mv}}} p(m(v)) \cap \bigcap_{v \in V^{L^{T}_{mv}}} u(m(v))$, the corresponding regular rule application sequence $SCT_{mv}[t] \rightarrow^{R} SCT'$ and isomorphism $iso : SCT_{mv}'[t] \rightarrow SCT'$ are also valid for $SCT_{mv}''[t]$ and satisfy the condition on bookkeeping sets, since $SCT' = SCT_{mv}'[t] = SCT_{mv}''[t]$ (up to isomorphism).

In accordance with Theorem \ref{the:mv_rule_application}, there is an extended sequence $SCT_{mv} \rightarrow^{R_{mv}} SCT_{mv}' \rightarrow^{r_{mv}}_{m} SCT_{mv}''$ if and only if for all $t \in P$, the regular rule application sequence $SCT_{mv}[t] \rightarrow^{R} SCT_{mv}'[t]$ can be extended by a rule application $SCT_{mv}'[t] \rightarrow^{r}_{trans(m)} SCT_{mv}''[t]$ that satisfies the condition on bookkeeping sets.

Thus, there is a sequence of rule applications $SCT_{mv} \rightarrow^{R_{mv}} SCT_{mv}' \rightarrow^{r_{mv}}_{m} SCT_{mv}''$ if and only if for all $t \in V^V$, there is a sequence of rule applications $SCT_{mv}[t] \rightarrow^{R} SCT''$ with $SCT_{mv}''[t] = SCT'' \wedge iso(B_{mv}(SCT_{mv}''[t])) = B(SCT'')$.

With the proof for the base case and the induction step, we have proven the validity of the corollary.
\end{proof}

Intuitively, the multi-version forward rules perform an interleaved, parallel transformation of all model versions encoded in $SCT_{mv}$. The application of a multi-version rule $L_{mv} \leftarrow K_{mv} \rightarrow R_{mv}$ corresponds to the application of the original rule to all model versions in $P = \bigcap_{v \in V^{L_{mv}}} p(m(v)) \cap \bigcap_{v \in V^{L^{T}_{mv}}} u(m(v))$ and leaves all other model versions unchanged. Thus, a multi-version rule application effectively extends the corresponding original rule application sequences for all versions in $P$ by the associated original rule application, whereas it represents the ``skipping'' of a step in the sequences of all versions not in $P$.

For a deterministic TGG, a correct translation of source graph $S$ is given by any maximal rule application sequence of forward rules that deletes all bookkeeping edges in the source model. Note that because of the determinism criterion, either every maximal rule application sequences or none of them satisfies the bookkeeping criterion. Correctness of the joint translation of all individual versions via multi-version forward rules is hence given by the following corollary:

\begin{corollary} \label{cor:mv_max_rule_application_sequence}
For a TGG with associated set of forward rules $R$ and multi-version forward rules $R_{mv}$ and a multi-version model triple $SCT_{mv}$ with bookkeeping and version graph $V$, there is a maximal sequence of rule applications $SCT_{mv} \rightarrow^{R_{mv}} SCT_{mv}'$ if and only if for all $t \in V^V$, there is a maximal sequence of regular rule applications $SCT_{mv}[t] \rightarrow^{R} SCT'$ such that $SCT_{mv}'[t] = SCT' \wedge B_{mv}(SCT_{mv}'[t]) = B(SCT')$.
\end{corollary}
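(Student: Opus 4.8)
The plan is to build directly on Corollary~\ref{cor:mv_rule_application_sequence}, which already establishes the desired correspondence for arbitrary (not necessarily maximal) rule application sequences, and to add only the transfer of maximality across the two sides. Since maximality of a sequence is by definition the non-applicability of any rule to its final graph, it suffices to prove the following applicability equivalence for the reached graphs: a multi-version forward rule from $R_{mv}$ is applicable to $SCT_{mv}'$ via some match if and only if there exists a version $t \in V^V$ and a rule from $R$ that is applicable to $SCT_{mv}'[t]$. Once this equivalence is available, the corollary follows by combining it with Corollary~\ref{cor:mv_rule_application_sequence} and the fact that $SCT_{mv}'[t] = SCT'$ up to isomorphism for the corresponding regular result $SCT'$, which also carries over the bookkeeping equality $B_{mv}(SCT_{mv}'[t]) = B(SCT')$.

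For the forward direction of the applicability equivalence, I would assume a multi-version rule $r_{mv}$ is applicable to $SCT_{mv}'$ via a match $m$. Its applicability requires the associated version set $P = \bigcap_{v \in V^{L_{mv}}} p(m(v)) \cap \bigcap_{v \in V^{L^{T}_{mv}}} u(m(v))$ to be non-empty, so I can pick some $t \in P$, and Theorem~\ref{the:mv_rule_applicability} then directly yields that the original rule $r$ is applicable to $SCT_{mv}'[t]$ via $trans(m)$. For the converse direction, I would start from an original rule $r$ applicable to $SCT_{mv}'[t_0]$ via a match $\mu$ for some fixed version $t_0$, and lift $\mu$ to a multi-version match $m$ into $SCT_{mv}'$ using the bijections $origin$ and $trans$, so that $trans(m)$ reproduces $\mu$ on version $t_0$. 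The presence and bookkeeping conditions witnessing applicability of $r$ at $t_0$ are exactly the conditions that place $t_0$ in the version set $P$ determined by $m$, so $P \neq \emptyset$, and applying Theorem~\ref{the:mv_rule_applicability} in the other direction gives applicability of $r_{mv}$ to $SCT_{mv}'$ via $m$.

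With the applicability equivalence in hand, the two directions of the corollary proceed as follows. If $SCT_{mv} \rightarrow^{R_{mv}} SCT_{mv}'$ is maximal, Corollary~\ref{cor:mv_rule_application_sequence} supplies for every $t$ a regular sequence $SCT_{mv}[t] \rightarrow^{R} SCT'$ with $SCT' = SCT_{mv}'[t]$; were any of these non-maximal, the applicability equivalence would produce an $r_{mv}$ applicable to $SCT_{mv}'$, contradicting maximality, so each regular sequence is maximal. Conversely, given maximal regular sequences for all $t$, Corollary~\ref{cor:mv_rule_application_sequence} yields a multi-version sequence reaching some $SCT_{mv}'$ with $SCT_{mv}'[t] = SCT'$ for each $t$; any $r_{mv}$ applicable to $SCT_{mv}'$ would, again by the applicability equivalence, make a regular rule applicable to some $SCT_{mv}'[t] = SCT'$, contradicting maximality of that regular sequence. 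Hence the multi-version sequence is maximal, and the bookkeeping equalities are inherited from Corollary~\ref{cor:mv_rule_application_sequence}.

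The main obstacle I anticipate is the converse direction of the applicability equivalence, namely verifying that lifting a regular match $\mu$ applicable at a single version $t_0$ indeed yields an applicable multi-version match rather than merely a structural one. The delicate point is that the version set $P$ induced by the lifted match $m$ may contain further versions besides $t_0$, while Theorem~\ref{the:mv_rule_applicability} requires $r$ to be applicable at every $t \in P$. This is resolved by observing that $P$ is defined precisely as the set of versions satisfying the presence and bookkeeping conditions, which by the proof of Theorem~\ref{the:mv_rule_applicability} coincide with per-version applicability of $r$ via $trans(m)$; thus membership in $P$ is equivalent to applicability, and no additional version can obstruct $r_{mv}$. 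Care must also be taken that the lift of $\mu$ along $origin$ and $trans$ is a well-defined graph morphism into $SCT_{mv}'$ and that $t_0$ genuinely lands in $P$, both of which rest on the bijectivity of $origin$ and $trans$ and the match correspondence already established in the proof of Theorem~\ref{the:mv_rule_applicability}.
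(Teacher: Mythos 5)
Your proposal is correct and follows essentially the same route as the paper: the paper's proof likewise invokes Corollary~\ref{cor:mv_rule_application_sequence} for the correspondence of (not necessarily maximal) sequences and then transfers maximality via Theorem~\ref{the:mv_rule_applicability}. The only difference is one of detail: where the paper states that the maximality equivalence ``follows directly'' from Theorem~\ref{the:mv_rule_applicability}, you explicitly work out the underlying applicability equivalence, including the lifting of a regular match at a single version $t_0$ to a multi-version match with $t_0 \in P$ -- a step the paper leaves implicit but which your treatment handles correctly.
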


\begin{proof}
The existence of a sequence of original rule applications for a sequence of multi-version rule applications and all versions $t \in V^V$ and vice-versa is given by Corollary \ref{cor:mv_rule_application_sequence}. From Theorem \ref{the:mv_rule_applicability}, it follows directly that the multi-version sequence is maximal if and only if the regular sequences are maximal for all $t \in V^V$.
\end{proof}

Thus, for a deterministic TGG and by corollaries \ref{cor:correctness_bookkeeping_projection_sets} and \ref{cor:mv_max_rule_application_sequence}, the result of repeated application of adapted transformation rules to a multi-version model prepared for multi-version translation until a fixpoint is reached is equivalent to the results of repeated application of the original rules to the individual model versions prepared for translation, that is, the results of transforming the individual model versions using the TGG.

We thereby have the correctness of the forward transformation using multi-version forward rules $trans^{F}_{mv}$, which applies multi-version forward rules to a multi-version model with bookkeeping until a fixpoint is reached.

\begin{theorem}
For a correct version history $\Delta^{M_{\{1,...,n\}}}$ and a triple graph grammar with set of forward rules $R$, it holds up to isomorphism that

\begin{equation}
\forall t \in \{1, ..., n\} : trans^{F}_{mv}(init_F(comb(\Delta^{M_{\{1,...,n\}}})), adapt(R))[t] = trans^F(M_t, R)
\end{equation}

\end{theorem}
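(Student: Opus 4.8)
The plan is to assemble the statement from the two main ingredients already established: the correctness of the initialization via the bookkeeping-sensitive projection (Theorem \ref{the:correctness_bookkeeping_projection} and Corollary \ref{cor:correctness_bookkeeping_projection_sets}), and the equivalence of maximal multi-version rule application sequences with maximal regular sequences for each version (Corollary \ref{cor:mv_max_rule_application_sequence}). Writing $MVM = comb(\Delta^{M_{\{1,...,n\}}})$, I would first recall from \cite{Barkowsky2022} that $proj(MVM, t) = M_t$ for every $t \in \{1, ..., n\}$, and observe that the preparation step $init_F$ --- which removes all $ucv$ and $udv$ edges and recreates them from the existing $cv$ and $dv$ edges --- precisely establishes the invariant $\forall v_{mv} \in V^{MVM}: u(v_{mv}) = p(v_{mv})$ required as the hypothesis of Theorem \ref{the:correctness_bookkeeping_projection}.

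Second, I would apply Theorem \ref{the:correctness_bookkeeping_projection} together with Corollary \ref{cor:correctness_bookkeeping_projection_sets} to the prepared model $init_F(MVM)$, obtaining up to isomorphism $init_F(MVM)[t] = init_F(M_t)$ together with matching bookkeeping sets $B_{mv}(init_F(MVM)[t]) = B(init_F(M_t))$ for every version $t$. This identifies the per-version starting point of the projected multi-version transformation with the starting point of the regular transformation of $M_t$.

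Third, since $trans^{F}_{mv}$ applies the multi-version forward rules in $adapt(R)$ to $init_F(MVM)$ until a fixpoint is reached, its output is the endpoint of some maximal sequence $init_F(MVM) \rightarrow^{R_{mv}} SCT_{mv}'$. Corollary \ref{cor:mv_max_rule_application_sequence}, instantiated with the starting triple $SCT_{mv} = init_F(MVM)$, then yields for each $t \in V^V$ a maximal regular sequence $init_F(MVM)[t] \rightarrow^{R} SCT'$ with $SCT_{mv}'[t] = SCT'$ and matching bookkeeping. Combining this with the identification from the previous step, $SCT'$ is the endpoint of a maximal regular sequence starting from $init_F(M_t)$, which is exactly what $trans^F(M_t, R)$ computes. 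Chaining the two isomorphisms then delivers the claimed equality $trans^{F}_{mv}(init_F(MVM), adapt(R))[t] = trans^F(M_t, R)$ up to isomorphism.

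The hard part will be the appeal to determinism needed to turn the existential statements of Corollary \ref{cor:mv_max_rule_application_sequence} into equalities of the \emph{functions} $trans^{F}_{mv}$ and $trans^F$, rather than merely a correspondence between some pair of sequences. The regular side is unproblematic: for a deterministic TGG every maximal forward sequence yields the same result, so $trans^F(M_t, R)$ is well defined. For the multi-version side, I must additionally argue that $trans^{F}_{mv}$ produces a unique result regardless of the order in which multi-version rules are applied. I would establish this by noting that any two maximal multi-version sequences project, via Corollary \ref{cor:mv_max_rule_application_sequence}, to maximal regular sequences for every version $t$, which coincide by determinism; since a multi-version model with bookkeeping is determined up to isomorphism by its bookkeeping-sensitive projections over all versions through the bijection underlying $comb$ and $proj$, the two multi-version results must themselves be isomorphic. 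This well-definedness argument, rather than any rule-level computation, is the step requiring the most care.
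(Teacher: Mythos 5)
Your proposal is correct and follows essentially the same route as the paper, whose proof is simply ``follows directly from Theorem~\ref{the:correctness_bookkeeping_projection} and Corollary~\ref{cor:mv_max_rule_application_sequence}'' (with the surrounding text invoking Corollary~\ref{cor:correctness_bookkeeping_projection_sets} and determinism exactly as you do). The only difference is that you spell out the well-definedness of $trans^{F}_{mv}$ under reordering of maximal multi-version sequences, a point the paper leaves implicit in its appeal to the determinism of the TGG; this is a useful elaboration, not a different argument.
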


\begin{proof}
Follows directly from Theorem \ref{the:correctness_bookkeeping_projection} and Corollary \ref{cor:mv_max_rule_application_sequence}.
\end{proof}

% !TeX root = ../tr_mvm_tggs.tex

\section{Evaluation} \label{sec:evaluation}

In order to evaluate our approach empirically with respect to execution time performance, we have realized the presented concepts in our MoTE2 tool \cite{hildebrandt2014} for TGG-based model transformation, which is implemented in the context of the Java-based Eclipse Modeling Framework \cite{emf} and has been shown to be efficient compared to other existing model transformation tools \cite{hildebrandt2014}.

As an application scenario, we consider the transformation of Java abstract syntax trees to class diagrams. We have therefore modeled this transformation as a TGG with MoTE2 and use the original and our adapted implementation to automatically derive the required forward rules respectively multi-version forward rules.

To obtain realistic source models, we have extracted the version history of one small personal Java project (\emph{rete}, around 50 versions) and one larger open source Java project (\emph{henshin} \cite{arendt2010henshin}, around 2000 versions) from their respective GitHub repositories and have constructed the corresponding history of abstract syntax trees using the MoDisco tool \cite{bruneliere2010modisco}. As input for the solution presented in Sections \ref{sec:mv_rules_derivation} and \ref{sec:mv_transformation_execution}, we have consolidated both version histories into multi-version models using a mapping based on hierarchy and naming.

Our implementation and the employed datasets are available under \cite{implementation}.

Based on this, we run the following model transformations for both repositories and measure the overall execution time\footnote{All experiments were performed on a Linux SMP Debian 4.19.67-2 machine with Intel Xeon E5-2630 CPU (2.3\,GHz clock rate) and 386\,GB system memory running OpenJDK version 11.0.6. Reported execution time measurements correspond to the mean execution time of 10 runs of the respective experiment.} for each of them:

\begin{itemize}
\item \textbf{SVM}: individual forward transformation of all model versions (abstract syntax trees) in the repository using the original MoTE2 implementation
\item \textbf{MVM}: joint forward transformation of all model versions in the repository using a multi-version model encoding and our implementation of the technique presented in Sections \ref{sec:mv_rules_derivation} and \ref{sec:mv_transformation_execution}
\end{itemize}

Note that the SVM strategy would require initial projection operations and a final combination of transformation results to work within the framework of multi-version models. However, for fairness of comparison of the transformation, we do not consider these additional operations in our evaluation.

Figure \ref{fig:execution_times} shows the execution times of the transformations using the two strategies. For both repositories, the transformation based on multi-version models requires substantially less time than the transformation of the individual model versions using the original MoTE2 tool, with a more pronounced improvement for the larger repository (factor 4 for the smaller and factor 74 for the larger repository).

\begin{figure}
\centering
\includegraphics[width=0.8\textwidth]{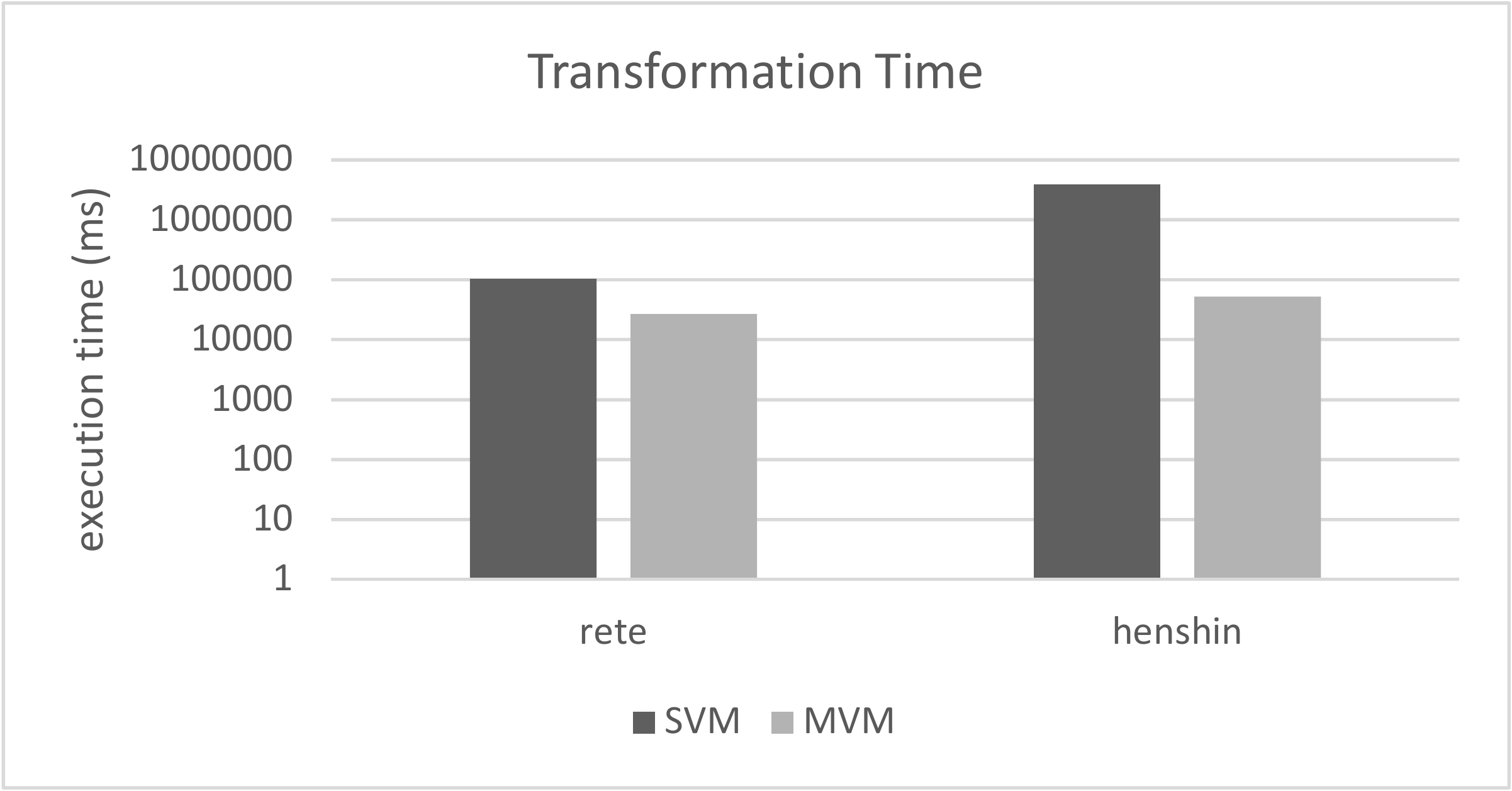}
\caption{execution time measurements for the transformation of all model versions in two different software repositories (logarithmic axis)}\label{fig:execution_times}
\end{figure}

The improvement in efficiency and scalability can likely be explained by two factors: First, SVM has to perform a somewhat expensive initialization step for every indidvidual model version that is to be transformed, whereas MVM only requires one such initialization. Second, many elements in the abstract syntax trees of the repositories are shared between many versions. SVM has to perform a separate transformation, including separate pattern matching, for each model version. In contrast, MVM only performs a transformation including pattern matching over a single multi-version model, the size of which is much smaller than the combined sizes of the encoded model versions, along with efficient search operations over the version graph. Since pattern matching is efficient in this example, that is, pattern matching has a runtime complexity that is linear in the size of the model for the derived forward rules, this results in an improved overall efficiency.

Threats to the internal validity of our experimental results include unexpected behavior of the Java virtual machine such as garbage collection. To address this threat, we have performed multiple runs of all experiments and report the mean execution time of these runs, with the standard deviation always below 5\% of the execution time. To minimize the impact of the concrete implementation on our measurements, we have realized our solution in the framework of the transformation tool we use for comparison and thereby largely use the same execution mechanism.

To mitigate threats to external validity, we use real-world models as the source models of the transformation. However, we remark that our results are not necessarily generalizable to different examples or application domains and make no quantitative claims regarding the performance of our approach.

% !TeX root = ../tr_mvm_tggs.tex

\section{Related Work} \label{sec:related_work}

The general problem of model versioning has already been studied extensively, both formally \cite{diskin2009model,rutle2009category} and in the form of concrete tool implementations \cite{murta2008towards,koegel2010emfstore}. Several solutions employ a unified representation of a model's version history similar to multi-version models \cite{rutle2009category,murta2008towards}. However, due to the problem definition focusing on the management of different versions of a single model, realising model transformation based on a unified encoding is out of scope for these approaches.

There is also a significant body of previous work on synchronization of concurrently modified pairs of models using triple graph grammars \cite{xiong2013synchronizing,orejas2020incremental}. The focus of these works is the derivation of compatible versions of source and target model that respect the modifications to either of. This report aims to make a step in an orthogonal direction, namely towards allowing living with inconsistencies by enabling developers to temporarily work with multiple modified, possibly conflicting versions of source and target model.

In the context of software product lines, so-called 150\% models are employed to encode different configurations of a software system \cite{10.1007/11561347_28,10.1145/3377024.3377030}. In this context, Greiner and Westfechtel present an approach for propagating so-called variability annotations along trace links created by model transformations \cite{westfechtel2020extending}, explicitly considering the case of transformations implemented via triple graph grammars. A similar approach could also be employed to propagate versioning information and would have the advantage of not requiring any adaptation of the employed rules, type graph, or transformation process. However, not integrating this propagation with the transformation process and only propagating versioning information after the transformation has been executed would mean that certain cases that are covered by our approach could not be handled. The occurence of such cases may hence prevent a possible correct transformation. For instance, under standard TGG semantics, such cases include a model element being translated differently in different model versions based on its context.

In previous work in our group, the joint execution of queries over multiple versions of an evolving model has been considered for both the case with \cite{Barkowsky2022} and without \cite{giese2019metric,sakizloglou2021incremental} parallel, branching development. This report builds on these results, but instead of focusing on pure queries without side-effects considers the case of writing operations in the form of model transformations.

% !TeX root = ../tr_mvm_tggs.tex

\section{Conclusion}\label{sec:conclusion}

In this report, we have presented a first step in the direction of model transformation on multi-version models in the form of an adaptation of the well-known triple graph grammar formalism that enables the joint transformation of all versions encoded in a multi-version model. The presented approach is correct with respect to the translation semantics of deterministic triple graph grammars for individual model versions, that is, it produces equivalent results. Initial experiments for evaluating the efficiency of our approach demonstrate that our technique can improve performance compared to a na\"ive realization, which simply translates all model versions individually according to a triple graph grammar specification, in a realistic application scenario.

In future work, we plan to build on the presented approach to realize model synchronization for multi-version models, that is, incremental propagation of changes to one or more versions of a source model to the corresponding target model versions. Furthermore, we want to explore the possibility of improving the efficiency of multi-version model transformations via incremental pattern matching for multi-version models. Another interesting direction for future work is the integration of advanced application conditions for the specification of triple graph grammar rules such as nested graph conditions into our approach. Finally, a more extensive evaluation can be conducted to study the scalability of the presented technique in more detail.

% !TeX root = ../dynamic_rete_nets.tex

\subsection*{Acknowledgements}

This work was developed mainly in the course of the project modular and incremental Global Model Management (project number 336677879), which is funded by the Deutsche Forschungsgemeinschaft.

\printbibliography
\end{document}